\newtheorem{lemma}{Lemma}
\newtheorem{theorem}{Theorem}
\title{DERMARK: A Dynamic, Efficient and Robust Multi-bit Watermark \\ for Large Language Models}
\author {
    Qihao Lin\textsuperscript{\rm 1},
    Chen Tang\textsuperscript{\rm 1},
    Lan Zhang\textsuperscript{\rm 1},
    Junyang Zhang\textsuperscript{\rm 1},
    Xiangyang Li\textsuperscript{\rm 1}
}
\begin{document}

\maketitle

\begin{abstract}
As large language models (LLMs) grow more powerful, concerns over copyright infringement of LLM-generated texts have intensified. LLM watermarking has been proposed to trace unauthorized redistribution or resale of generated content by embedding identifiers within the text.
Existing approaches primarily rely on one-bit watermarking, which only verifies whether a text was generated by a specific LLM. In contrast, multi-bit watermarking encodes richer information, enabling the identification of the specific LLM and user involved in generated or distributed content. 
However, current multi-bit methods directly embed the watermark into the text without considering its watermark capacity, which can result in failures, especially in low-entropy texts.
In this paper, we analyze that the watermark embedding follows a normal distribution. We then derive a formal inequality to optimally segment the text for watermark embedding. Building upon this, we propose DERMARK, a dynamic, efficient, and robust multi-bit watermarking method that divides the text into variable-length segments for each watermark bit during the inference.
Moreover, DERMARK incurs negligible overhead since no additional intermediate matrices are generated and achieves robustness against text editing by minimizing watermark extraction loss. 
Experiments demonstrate that, compared to SOTA, on average, our method reduces the number of tokens required per embedded bit by 25\%, reduces watermark embedding time by 50\%, and maintains high robustness against text modifications and watermark erasure attacks.

\end{abstract}

\section{Introduction}
In recent years, large language models (LLMs) such as GPT-4 \cite{achiam2023gpt} have achieved significant advancements, excelling in various tasks such as instruction following. Training an LLM requires substantial hardware resources, vast amounts of training data, and specialized expert knowledge. Consequently, LLMs are considered valuable intellectual property (IP) of their respective owners.
However, the increasing capabilities of these advanced models pose potential risks of copyright infringement, including unauthorized redistribution or resale of purchased LLM services \cite{birch2023model}. Given these concerns, there is an urgent need for mechanisms to trace the distribution of LLM-generated texts, thereby protecting the IP rights of LLM owners from malicious exploitation.

LLM watermarking has emerged as a promising solution \cite{liu2024survey}. By implicitly embedding watermarks into generated text, model owners can trace the downstream distribution of their outputs.
Most existing watermarking methods focus on verifying whether a given text was generated by a particular LLM—known as one-bit watermarking \cite{kirchenbauer2023watermark}. However, these approaches cannot embed user-specific identifiers and thus fail to support fine-grained attribution.
To address this limitation, multi-bit watermarking has been proposed, which embeds richer information—such as binary strings \cite{WangYC0LM0024}—by adding biases into the LLM logits.
This enables precise attribution of potentially infringing content by identifying both the LLM and the user involved in generation.

The primary factor determining the successful embedding of a multi-bit watermark in LLM-generated text is watermark capacity, defined as the maximum number of bits that can be embedded. This capacity is directly influenced by the entropy of the text. Specifically, the LLM predicts that tokens with relatively uniform probabilities—indicating high entropy—have a larger watermark capacity.
However, existing methods \cite{WangYC0LM0024} ignore this property. They typically divide the text into equal-length segments based solely on the watermark length, embedding one bit per segment. This uniform segmentation strategy significantly degrades the semantics of short or low-entropy texts. In particular, when generating structured content such as code, where entropy tends to be low, watermark embedding often fails.
Therefore, a crucial advancement for multi-bit watermarking in LLM-generated text is to dynamically segment the text based on its entropy. This ensures that each bit is embedded into a segment with sufficient capacity.

Designing such a multi-bit watermarking method poses three key non-trivial challenges:
\textbf{(1) Lack of a principled segmentation method for LLM-generated text.} 
While token entropy serves as a rough proxy for estimating the watermark capacity of generated text, it remains unclear how much entropy is sufficient to reliably embed a single watermark bit. This uncertainty complicates the dynamic assignment of text segments for watermark embedding. 
Furthermore, due to the autoregressive nature of LLMs—where each token is generated based on the previously sampled tokens—embedding a watermark into one token inevitably alters the distribution of subsequent tokens. 
As a result, it becomes impossible to accurately estimate the entropy of an entire segment in advance without already committing to specific token-level modifications. 
This entanglement creates a chicken-and-egg problem: watermark embedding decisions require prior knowledge of entropy, but embedding itself changes the entropy landscape. Effectively resolving this challenge requires a robust mechanism for aligning watermark bits with text segments in real-time during generation—a problem that remains complex and unexplored.
(2) \textbf{Fragility of multi-bit watermarks to text editing.} 
Each bit is embedded within a specific text segment. Consequently, even minor post-editing operations—such as insertions or deletions—can disrupt segment integrity and compromise watermark extraction. This vulnerability is particularly problematic in real-world applications where LLM-generated content is often subject to downstream modifications.
(3) \textbf{Susceptibility to watermark erasure attacks.} 
Since LLM-generated text is fully visible to end users, adversaries may attempt to remove the watermark to evade attribution or facilitate unauthorized redistribution. For example, an attacker could analyze distributional patterns in watermarked text over multiple generations to infer the embedding strategy and reverse-engineer the watermark. Alternatively, paraphrasing the text can obscure or eliminate the embedded signal, rendering the watermark undetectable.

In this work, we theoretically demonstrate that watermark embedding—implemented by perturbing the LLM logits—follows a normal distribution conditioned on these logits.
Based on this, we derive an inequality that enables real-time assessment of whether the currently generated token sequence is sufficient to embed one watermark bit during token generation. This condition enables dynamic estimation of the required segment length for each bit.
Building on this formulation, we propose DERMARK, a dynamic and efficient multi-bit watermarking method. During the embedding phase, DERMARK adaptively segments the text, using the LLM logits to guide the segmentation for each bit.
Since segmentation and embedding rely solely on the logits and do not involve any intermediate matrix computations, the additional time and memory overhead is negligible compared to the overall inference cost.
In the extraction phase, we apply dynamic programming to minimize the segmentation loss (inequality violations from misaligned segments) and the color loss (color imbalance in token distributions within each segment), thereby achieving robustness against text editing.
Furthermore, DERMARK focuses on segmentation, while relying on existing one-bit watermarking methods within each segment. As a result, robustness-enhancing techniques developed for one-bit watermarking can be seamlessly integrated into DERMARK, making it resilient to erasure attacks such as paraphrasing.

Our main contributions are threefold:
(1) We formally show that watermark embedding—achieved by perturbing the LLM logits—follows a normal distribution. This enables the derivation of a closed-form expression for estimating the number of tokens required to embed each watermark bit.
(2) We propose DERMARK, a dynamic, efficient, and robust multi-bit watermarking framework.
It performs real-time segmentation and embedding during inference with minimal overhead. 
A dynamic programming-based extraction method ensures robustness to text edits, while compatibility with existing one-bit watermarking methods ensures resilience against erasure attacks.
(3) Extensive experiments demonstrate that DERMARK achieves superior efficiency, requiring on average 2.26 and 3.7 fewer tokens per embedded bit than SOTA on OPT-1.3b and LLaMA2-7b, respectively. Additionally, it maintains high robustness against text insertion, deletion, and watermark erasure, while incurring minimal inference overhead.

\section{Related Work}

Since 2023, extensive efforts have been made to embed watermarks into LLM-generated text, which can be broadly categorized into two types: one-bit watermarking and multi-bit watermarking.
One-bit watermarking aims to determine whether a given text was generated by a specific LLM~\cite{kirchenbauer2023watermark}. Building upon this foundation, several methods have been proposed to improve robustness against editing, paraphrasing, and other transformations~\cite{liu2024adaptive,hucwwzh24,feng2024certified,liuphm024,wan2024watermarking,guo2024context}, enhance generation quality~\cite{Fu_Xiong_Dong_2024}, or achieve task-agnostic applicability~\cite{masrani2025task}. However, these techniques are inherently limited to binary detection and cannot be directly extended to scenarios requiring user-specific information.
To address this limitation, multi-bit watermarking has been proposed to embed binary strings into LLM-generated text, enabling fine-grained attribution across users and models.
Among them, \cite{yooakak24} extends one-bit watermarking to a multi-bit setting by assigning fixed-length segments to each bit.
~\cite{WangYC0LM0024} incorporates an additional watermark loss during inference to guide bit-wise embedding into fixed-length segments. \cite{zhang2024remark} proposes training a model to encode the watermark into the LLM-generated text. 
However, a key limitation shared by these methods is their reliance on fixed-length segmentation, which fails to consider the variable watermark capacity of different texts. Ignoring this factor can result in embedding failures, particularly in low-entropy or highly structured texts.

To address this limitation, we aim to design a dynamic multi-bit watermarking framework that, during the LLM’s inference phase, dynamically estimates the segment length required for each watermark bit and adaptively segments the output text based on these estimations.

\section{Theoretical Analysis}
This section formalizes the multi-bit watermarking problem, introduces key notations, and derives the inequalities that need to be satisfied for watermark embedding.

\subsection{Problem Statement}

In multi-bit watermarking, the goal is to embed a binary watermark consisting of multiple bits into the text generated by an LLM during its inference process. As previously discussed, we divide the text into segments, each dedicated to encoding a single bit. Accordingly, the embedding procedure can be conceptually divided into two steps:
\newline
\textbf{Step 1. Segmentation:} Partition the generated text into segments, each segment corresponding to one watermark bit.
\newline
\textbf{Step 2. Embedding:} Embed each bit into its corresponding segment.

Our primary focus in this work is on \textit{Step 1-the segmentation process}, which critically determines whether the watermark embedding will succeed. 

Once the text has been partitioned, we apply the one-bit watermarking technique proposed by \citet{kirchenbauer2023watermark} to embed each bit into its corresponding segment.

\begin{figure}[t]
    \centering
    \includegraphics[width=0.48\textwidth]{ 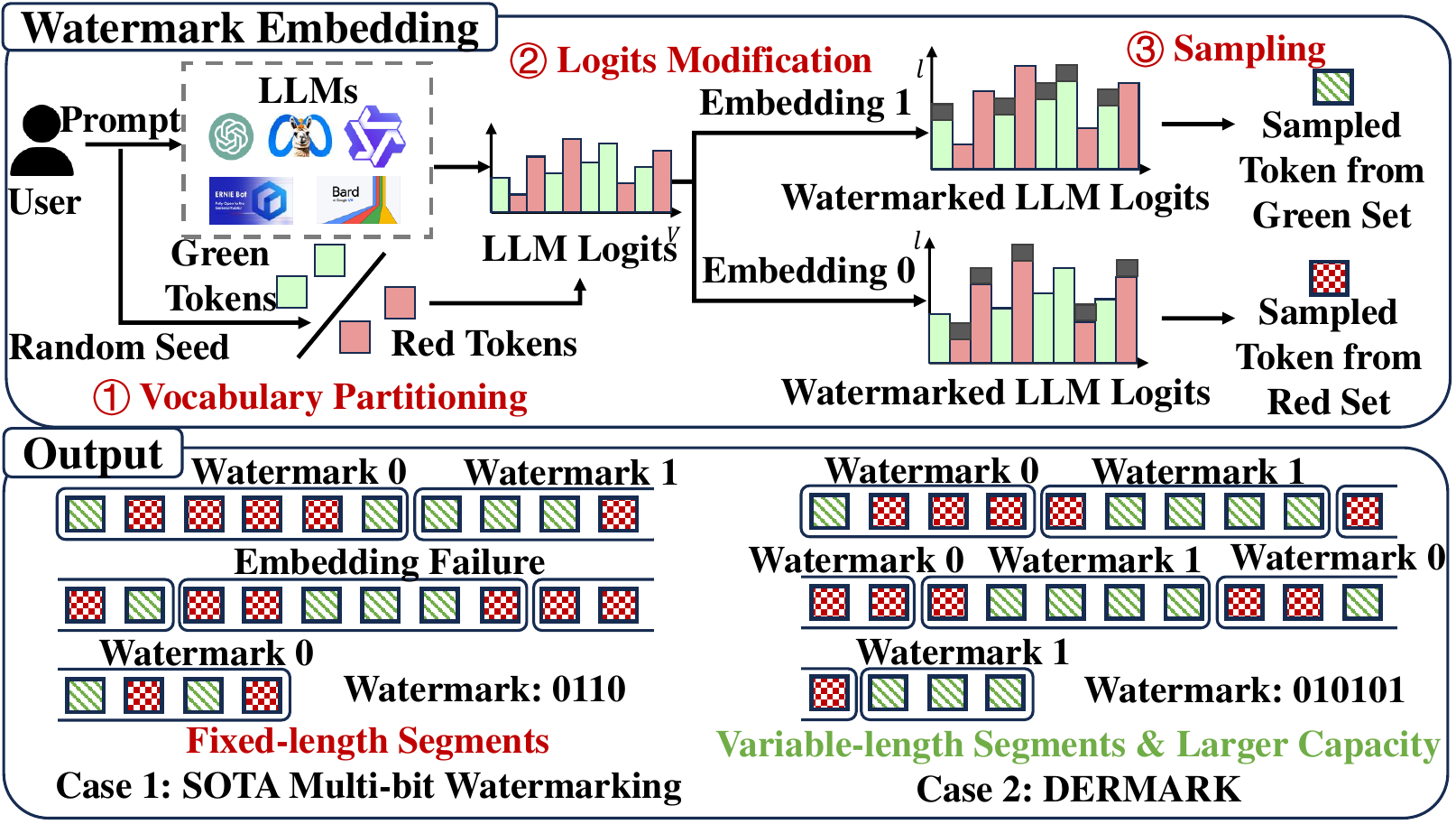}
    \caption{Multi-bit Watermarking Pipeline.}
    \label{fig:process}

\end{figure}

\subsection{Notations and Preliminaries}
To facilitate the understanding of our method, we provide a brief overview of the notation and processes involved in LLM inference, watermark embedding, and extraction.

During the LLM inference phase, let $\mathbf{x}^{p}$ denote the prefix prompt and $\mathbf{s} = \{s^{(0)}, s^{(1)}, \ldots\}$ represent the generated text, where $s^{(t)}$ denotes the $t$-th token. Let $V = \{s_1, \ldots, s_{|V|}\}$ be the vocabulary, where $s_i$ denotes the $i$-th token in the vocabulary.
At each decoding step $t$, the input to the LLM consists of the prompt $\mathbf{x}^{p}$ and the previously generated tokens $\mathbf{s}^{:t-1} = \{s^{(0)}, \ldots, s^{(t-1)}\}$. The LLM produces a vector of logits:
$\mathbf{L}(\mathbf{x}^{p}, \mathbf{s}^{:t-1}) = \{l_1^{(t)}, \ldots, l_{|V|}^{(t)}\}.$
These logits are then passed through the softmax function to obtain the predicted token distribution:
$\mathbf{P}(\mathbf{x}^{p}, \mathbf{s}^{:t-1}) = \{p_1^{(t)}, \ldots, p_{|V|}^{(t)}\}$, where $p_i^{(t)} = e^{l_{i}^{(t)}}/\sum_{j=1}^{|V|} e^{l_j^{(t)}}.$
Finally, the token $s^{(t)}$ is sampled from $\mathbf{P}(\mathbf{x}^{p}, \mathbf{s}^{:t-1})$ using a predefined sampling strategy (e.g., probabilistic sampling).

Multi-bit watermark embedding occurs concurrently with LLM inference. Given a binary watermark $m \in \{0,1\}^K$, each bit $m_k$ is embedded into a segment $S_k$ of consecutive tokens from $\mathbf{s}$. We apply the one-bit watermarking method \cite{kirchenbauer2023watermark} to embed $m_k$ into its corresponding segment $S_k$.
As shown in Fig.\ref{fig:process}, the  watermarking process proceeds as follows:
\newline
\textbf{1. Vocabulary Partitioning.} A random seed is used to partition the vocabulary into a \textit{green list} $G$ and a \textit{red list} $R$ of equal size: $G = \{s_{1}, \ldots, s_{|V|/2}\}, \quad R = \{s_{|V|/2+1}, \ldots, s_{|V|}\}$.
\newline
\textbf{2. Logits Modification}: For bit $m_k$, if $m_k = 1$, a bias $\delta$ is added to the logits of green tokens; if $m_k = 0$, the bias is applied to red tokens: $l_i'^{(t)} = l_i^{(t)} + \delta \cdot \mathbb{I}[s_i \in C_k]$, where $C_k = G$ if $m_k = 1$, and $C_k = R$ if $m_k = 0$.
\newline
\textbf{3. Sampling.} The modified logits $\mathbf{L}'(\mathbf{x}^{p}, \mathbf{s}^{:t-1})$ are passed through the softmax function to obtain the watermarked distribution $\mathbf{P}'(\mathbf{x}^{p}, \mathbf{s}^{:t-1})$, then sample the token $s^{(t)}$.This process is repeated until all tokens in $S_k$ are generated, and the watermark bit $m_k$ is embedded into $S_k$ simultaneously.

In the watermark extraction phase,  
to extract the $k$-th bit $m_k'$ from a given segment $S_k$, we count the number of tokens from $G$ and $R$. If more than half of the tokens in $S_k$ belong to $G$, we decode $m_k' = 1$; otherwise, we set $m_k' = 0$. 
Note that in practical watermark extraction, while individual segments may occasionally yield false positives, the probability that an unwatermarked, human-written text matches a valid multi-bit watermark sequence by chance is negligible.

\subsection{Theoretical Derivation}
Building on the notation and processes introduced above, we now derive the theoretical relationship between watermark bits and the segment length required for embedding.

Since the vocabulary $V$ is partitioned into green and red lists ($G$ and $R$), we begin by analyzing the probability that the next token $s^{(t)}$ belongs to either set prior to watermarking. Let $P^{(t)}_G$ and $P^{(t)}_R$ denote the probabilities that the $t$-th token lies in $G$ or $R$, respectively:
\begin{gather}
    P^{(t)}_G = \frac{\sum_{s_i \in G} e^{l_i^{(t)}}}{\sum_{s_i \in V} e^{l_i^{(t)}}}, \quad
    P^{(t)}_R = \frac{\sum_{s_i \in R} e^{l_i^{(t)}}}{\sum_{s_i \in V} e^{l_i^{(t)}}}.
\end{gather}

Next, we analyze the probabilities after watermarking. Let $P_G'^{(t)}$ and $P_R'^{(t)}$ represent the probabilities of the token belonging to $G$ or $R$ after applying the watermarking bias. We present the following result:
\begin{lemma}
\label{lemma:1}
If $m_k = 1$, 
$P_G'^{(t)} = \frac{e^\delta \cdot P_G^{(t)}}{e^\delta \cdot P_G^{(t)} + (1 - P_G^{(t)})}$; if $m_k = 0$,
$P_R'^{(t)} = \frac{e^\delta \cdot P_R^{(t)}}{e^\delta \cdot P_R^{(t)} + (1 - P_R^{(t)})}$.

\end{lemma}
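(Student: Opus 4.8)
The plan is to compute $P_G'^{(t)}$ directly from the definition of the softmax applied to the modified logits, exploiting the fact that the bias $\delta$ is added uniformly to every green token. First I would write out the modified logits: when $m_k=1$, we have $l_i'^{(t)} = l_i^{(t)} + \delta$ for $s_i \in G$ and $l_i'^{(t)} = l_i^{(t)}$ for $s_i \in R$. Substituting into the softmax formula gives
\begin{equation}
P_G'^{(t)} = \frac{\sum_{s_i \in G} e^{l_i^{(t)} + \delta}}{\sum_{s_i \in G} e^{l_i^{(t)} + \delta} + \sum_{s_i \in R} e^{l_i^{(t)}}}.
\end{equation}
The key observation is that $\delta$ is a constant independent of $i$, so $e^\delta$ factors out of the green sum: $\sum_{s_i \in G} e^{l_i^{(t)}+\delta} = e^\delta \sum_{s_i \in G} e^{l_i^{(t)}}$.

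Next I would divide the numerator and denominator by the unwatermarked normalizer $Z = \sum_{s_i \in V} e^{l_i^{(t)}}$. The green partial sum becomes $P_G^{(t)}$ and the red partial sum becomes $P_R^{(t)} = 1 - P_G^{(t)}$, using the fact that $G$ and $R$ partition $V$ so their probabilities sum to one. This converts the expression into
\begin{equation}
P_G'^{(t)} = \frac{e^\delta \, P_G^{(t)}}{e^\delta \, P_G^{(t)} + (1 - P_G^{(t)})},
\end{equation}
which is exactly the claimed identity. The $m_k=0$ case is fully symmetric: the bias now lands on the red tokens, so the identical manipulation with the roles of $G$ and $R$ interchanged yields the formula for $P_R'^{(t)}$.

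This argument is essentially a one-line factorization, so there is no real obstacle; the only thing to be careful about is the normalization bookkeeping, namely dividing through by $Z$ and correctly identifying the partial sums as $P_G^{(t)}$ and $1-P_G^{(t)}$. The crucial structural fact being used is that the bias is a uniform additive shift across an entire list, which is precisely what lets $e^\delta$ emerge as a single multiplicative factor rather than getting entangled token-by-token. I would present the $m_k=1$ derivation in full and simply remark that the $m_k=0$ case follows by symmetry, to avoid redundant calculation.
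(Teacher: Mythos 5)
Your proposal is correct and follows essentially the same route as the paper's proof: write the watermarked softmax with the bias added to the green logits, factor $e^\delta$ out of the green sum, and normalize by the unwatermarked partition function so the partial sums become $P_G^{(t)}$ and $1-P_G^{(t)}$, with the $m_k=0$ case by symmetry. Your version is slightly more explicit about the division-by-$Z$ bookkeeping, but the argument is the same.
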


\begin{proof}
Taking $m_k=1$ as an example.

$P_G'^{(t)} = \frac{e^\delta \sum_{s_i \in G} e^{l_i^{(t)}}}{e^\delta \sum_{s_i \in G} e^{l_j^{(t)}} + \sum_{s_i \in R} e^{l_j^{(t)}}}=\frac{e^\delta \cdot P_G^{(t)}}{e^\delta \cdot P_G^{(t)} + (1 - P_G^{(t)})}.$

\end{proof}

Let $P'^{(t)}$ denote the probability that the next token aligns with the watermarking intent (i.e., sampled from $G$ if $m_k=1$, from $R$ if $m_k=0$). We compute its expected value as:
\begin{equation}\label{eq:Ep}
\mathbb{E}[P'^{(t)}] = P(m_k=1) \cdot P_G'^{(t)} + P(m_k=0) \cdot P_R'^{(t)}.
\end{equation}

\begin{figure*}
 \centering
    \includegraphics[width=\linewidth]{ 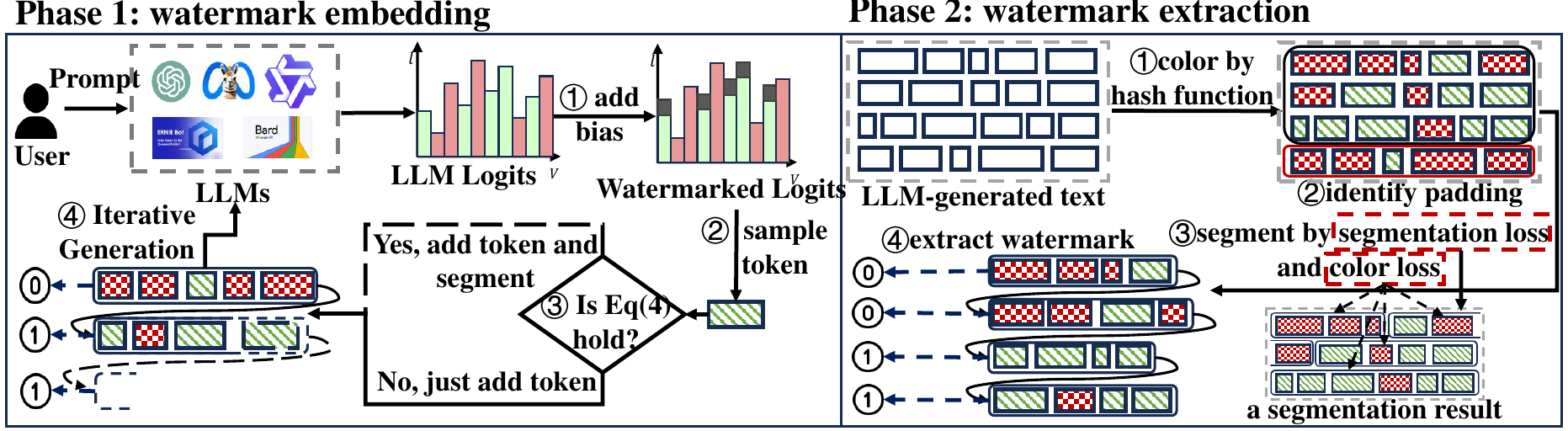}
    \caption{Workflow of DERMARK.}
    \label{fig:workflow}
\end{figure*}

To estimate the fraction of aligned tokens in a segment, let $S$ be a segment of $N$ tokens. Define $X$ as the number of tokens in $S$ that match the current bit embedding requirement, and let $T = X / N$ be the corresponding token proportion. Then, we have the following lemma:
\begin{lemma}
\label{lemma:2}
\begin{gather}
T \sim \mathcal N(\frac{\mu}{N},\frac{\sigma^2}{N^2}), \\
\mu = \sum_{t=1}^N\mathbb E [P'^{(t)}], \sigma^2 = \sum_{t=1}^N (\mathbb{E}[P'^{(t)}] - \mathbb{E}^2[P'^{(t)}]).
\nonumber
\end{gather}
\end{lemma}
\begin{proof}
$\{P'^{(t)}\}_{t=1}^N$ are mutually independent (not identically distributed).
Therefore, $X$ follows a \textbf{Poisson binomial distribution}, i.e., the sum of independent Bernoulli variables.
From the properties of the Poisson binomial distribution, we have:
\[
\mathbb{E}[X] = \sum_{t=1}^N \mathbb{E}[P'^{(t)}],
\mathrm{Var}(X) =\sum_{t=1}^N \left( \mathbb{E}[P'^{(t)}] - \mathbb{E}^2[P'^{(t)}] \right).
\]
By the \textbf{Central Limit Theorem} ~\cite{tang2023poisson}, as $N$ becomes large, the distribution of $X$ can be approximated by a normal distribution:
\[
X \sim \mathcal{N}(\mu, \sigma^2), \quad \text{where } \mu = \mathbb{E}[X], \; \sigma^2 = \mathrm{Var}(X).
\]

Since $T$ is a linear transformation of $X$, it also approximately follows a normal distribution:
\[
T \sim \mathcal{N} \left( \frac{\mu}{N}, \frac{\sigma^2}{N^2} \right).
\]
\end{proof}

To successfully embed one bit, the proportion of aligned tokens $T$ in the segment must exceed $0.5$.
Since $T$ follows a normal distribution, using the significance level $\alpha$, the probability of successful embedding can be estimated as:

\begin{gather}
P\left(T > \frac{1}{2}\right) = \Phi\left( \frac{\mathbb{E}[T] - \frac{1}{2}}{\sqrt{\mathrm{Var}(T)}} \right) \geq 1 - \alpha,
\end{gather}
\begin{gather}
\Rightarrow \Phi^{-1}(1 - \alpha) \leq \frac{\mathbb{E}[T] - \frac{1}{2}}{\sqrt{\mathrm{Var}(T)}}. \label{eq:inequation}
\end{gather}
Here, $\Phi(\cdot)$ is the cumulative distribution function (CDF) of the standard normal distribution, which maps a real-valued input to the probability that a standard normally distributed random variable is less than or equal to that input.

We summarize this result in the following theorem:
\begin{theorem}
\label{thm:1}
When the inequality in Eq.~\eqref{eq:inequation} holds, the watermark bit is embedded into the segment $S$ with confidence at least $1-\alpha$.
\end{theorem}
The significance level $\alpha$ controls the required confidence for watermark embedding. 
Given a fixed $\alpha$, Theorem~\ref{thm:1} shows that Eq.~\eqref{eq:inequation} enables real-time estimation of whether the current segment contains enough tokens to reliably embed the target watermark bit. 
Building on this result, we propose DERMARK—a dynamic, efficient, and robust method for multi-bit watermark embedding—detailed as follows.

\section{DERMARK}


\subsection{Design Overview}
The overall workflow of our approach is illustrated in Fig.~\ref{fig:workflow}, comprising two main phases: watermark embedding and watermark extraction.
During the embedding phase,  LLM-generated text is dynamically segmented and watermarked based on the current bit and Eq.~\eqref{eq:inequation}.
In the extraction phase, the watermarked text is segmented using dynamic programming, jointly minimizing segmentation loss and color loss to accurately recover the embedded watermark.

\subsection{Multi-bit Watermark Embedding}

During generation, as shown in phase 1 in Fig.~\ref{fig:workflow}, a bias is added to the logits based on the target bit. The modified logits produce a token distribution, from which we sample a token. We then use Eq.~\eqref{eq:inequation} to check whether the current segment can embed the bit. If the inequality holds, the segment ends and the next token starts a new segment. Otherwise, the current segment continues. This process repeats until all bits are embedded.

Nevertheless, to make this process practical for real-world applications, we propose two key enhancements:
\newline
\textbf{(1) Inequality determination.}
To compute Eq.~\eqref{eq:inequation}, we estimate $P_{k1} = P(m_k=1)$ and $P_{k0} = P(m_k=0)$ in Eq.~\eqref{eq:Ep} using the prior distribution of red and green tokens observed within the current segment:
\begin{align}
\label{eq:pk}
P_{k1} = \frac{G_{a:b} + \lambda}{b - a + 2\lambda}, 
P_{k0} = \frac{R_{a:b} + \lambda}{b - a + 2\lambda}.
\end{align}
Here, $a$ and $b$ are the segment's start and current positions; $G_{a:b}$ and $R_{a:b}$ are the counts of green and red tokens. We include a smoothing hyperparameter $\lambda$ for numerical stability.
\newline
\textbf{(2) Redundant tokens.}
If the generated text exceeds the required length for the watermark bits, we use the remaining tokens as padding. We flip the last bit and embed it into these remaining tokens, ensuring these tokens form a distinct final segment. This makes it easy to identify and discard padded content during extraction.

\subsection{Watermark Extraction}

Since the segmentation inequality in Eq.\eqref{eq:inequation} is highly sensitive to token-level perturbations, even minor text edits can lead to different segmentation outcomes, thereby compromising watermark extraction's reliability. To enhance extraction robustness, we propose a dynamic programming-based segmentation strategy tailored for watermark recovery.

Let the segmentation of a sequence $\mathcal{S}$ be denoted as $\text{Seg}(\mathcal{S}) = \{\dots,\mathcal{S}^{(a:b)},\dots\}$. Ideally, during segmentation, the inequality Eq.~\eqref{eq:inequation} is nearly tight, meaning the difference between both sides is sufficiently small.
To quantify deviations from this condition, we define the segmentation loss for each segment as the squared difference between the two sides of Eq.~\eqref{eq:inequation}:
\begin{gather}
    \label{eq:Ls}
    \mathcal{L}_s(a,b) = (f(\mathbb E [P'^{(t)}])-(\Phi^{-1}(1-\alpha))^2 -\epsilon_s)^2, 
    \nonumber
    \\
    f(\mathbb E [P'^{(t)}]) = \frac{(\sum_{t=a}^{b-1}\mathbb E [P'^{(t)}]-\frac{b-a}{2})^2}{ \sum_{t=a}^{b-1}\mathbb{E}[P'^{(t)}]-\sum_{t=a}^{b-1}\mathbb{E}^2[P'^{(t)}]}.
\end{gather}
Here, $a$ and $b$ represent the start and end indices of the segment, respectively, and $\epsilon_s$ is a bias-correction hyperparameter introduced to account for systematic deviations.

Furthermore, due to the presence of the embedded watermark, each segment exhibits a pronounced color imbalance, characterized by a significantly higher proportion of either red or green tokens. To quantitatively capture this effect, we define the color loss for a segment as the normalized difference between the counts of the two token types:
\begin{equation}
    \label{eq:Lc}
    \mathcal{L}_c(a,b) = | min(G_{a:b},R_{a:b})/(b-a) -\epsilon_c |,
\end{equation}
where $G_{a:b}$ and $R_{a:b}$ denote the number of green and red tokens, respectively, within the segment $\mathcal{S}^{(a:b)}$, and $\epsilon_c$ is a bias-correction hyperparameter accounting for systematic deviations in the expected distribution.

As a result, the watermark extraction loss is the sum of the above two losses described above:
\begin{gather}
    \mathcal{L}(\text{Seg}) = \sum (  \mathcal{L}_s(j,i) +\beta  \cdot \mathcal{L}_c(j,i)), \label{eq:total_loss}
\end{gather}
where $\beta$ is a tunable parameter controlling the relative weight of the two components.
To minimize this loss, we employ a dynamic programming approach to identify the optimal segmentation among all possible configurations, with a computational complexity of $\mathcal{O}(N^2)$. The complete watermark extraction workflow is illustrated in Fig.~\ref{fig:workflow} and consists of the following key steps:
\newline
\textbf{1. Token Coloring.} Each token is assigned a color (e.g., red or green) based on a hash of its preceding token.
\newline
\textbf{2. Padding Identification.} Tokens associated with padding are identified by the color distribution. Since padding embeds the inverse of the final bit of the multi-bit watermark, it can be detected based on color imbalance patterns.
\newline
\textbf{3. Segmentation.}
We initialize $\epsilon_s = 0$ and $\epsilon_c = 0$ and define the loss matrix $L[k][p]$, representing the minimum loss incurred by dividing the first $p$ tokens into $k$ segments. A corresponding predecessor matrix $\text{prev}[t][b]$ tracks the start position of the segment ending at position $b$ for $t$ segments. The recurrence relation is defined as: 
\begin{equation*}
   \text{if} \quad L[t-1][a] + \text{cost}[a][b] < L[t][b],
\end{equation*}
\begin{equation*}
    L[t][b] = L[t-1][a] + \text{cost}[a][b], \text{prev}[t][b] = a,
\end{equation*}
where $\text{cost}[a][b]$ denotes the loss of a single segment $\mathcal{S}^{(a:b)}$
\newline
\textbf{4. Bias Parameter Update.} 
Due to the effect of text editing on $\epsilon_s$ and $\epsilon_c$, we iteratively update these parameters based on the current segmentation:
$\epsilon_s'(a,b) = f(\mathbb{E} [P'^{(t)}]) - (\Phi^{-1}(\alpha))^2$ and
$\epsilon_c'(a,b) = \min(G_{a:b},R_{a:b}) / (b-a)$.
The parameters $\epsilon_s$ and $\epsilon_c$ are updated to the min values of $\epsilon_s'(a,b)$ and $\epsilon_c'(a,b)$ across all segments.
\newline
\textbf{5. Iterate Until Convergence.} Steps 3 and 4 are repeated until both $\epsilon_s$ and $\epsilon_c$ converge to stable values.
\newline
\textbf{6. Extract the Watermark from Segments.}
The watermark bit $m_k'$ is extracted from each segment based on the relative proportion of red and green tokens. The overall detection rate is computed as:
\begin{equation}
    d_r = 1-\sum_k \frac{|m_k-m_k'|}{K},
\end{equation}
which quantifies the similarity between the extracted watermark and the original watermark.

\subsection{Performance Analysis}

\textbf{Larger capacity}. Unlike prior methods that allocate a fixed number of tokens per bit, DERMARK introduces a dynamic segmentation strategy that adapts to the token-level watermark capacity, leading to larger capacity. This adaptive segmentation mechanism also improves performance on longer texts, effectively balancing watermark embedding strength and semantic preservation.
\newline
\textbf{Architecture-agnostic}. It only modifies the logits, making it fully compatible with autoregressive LLMs (e.g., GPT, Qwen) without requiring retraining or architecture changes.
\newline
\textbf{Efficient and deployment-friendly}.
Since DERMARK operates only through a lightweight post-processing layer, it avoids intermediate matrix computations, thus achieving watermark embedding and segmentation in linear time with the complexity of $\mathcal{O}(N)$.
Furthermore, the embedding process is compatible with PyTorch’s \texttt{transformers.watermarking\_config} interface, facilitating seamless integration into existing inference pipelines with minimal engineering effort.
\newline
\textbf{Robust against text editing}. Constraining segmentation through a loss-based formulation can effectively reduce the impact of text editing on watermark extraction. The introduction of relaxation terms $\epsilon$ helps to accommodate minor token modifications, thereby reducing sensitivity in evaluating the inequality condition in Eq.~\eqref{eq:inequation}, and maintaining reliable extraction even under perturbations.



\begin{figure*}[t]
    \centering
    \begin{subfigure}[b]{0.48\linewidth}
        \includegraphics[width=\textwidth]{ 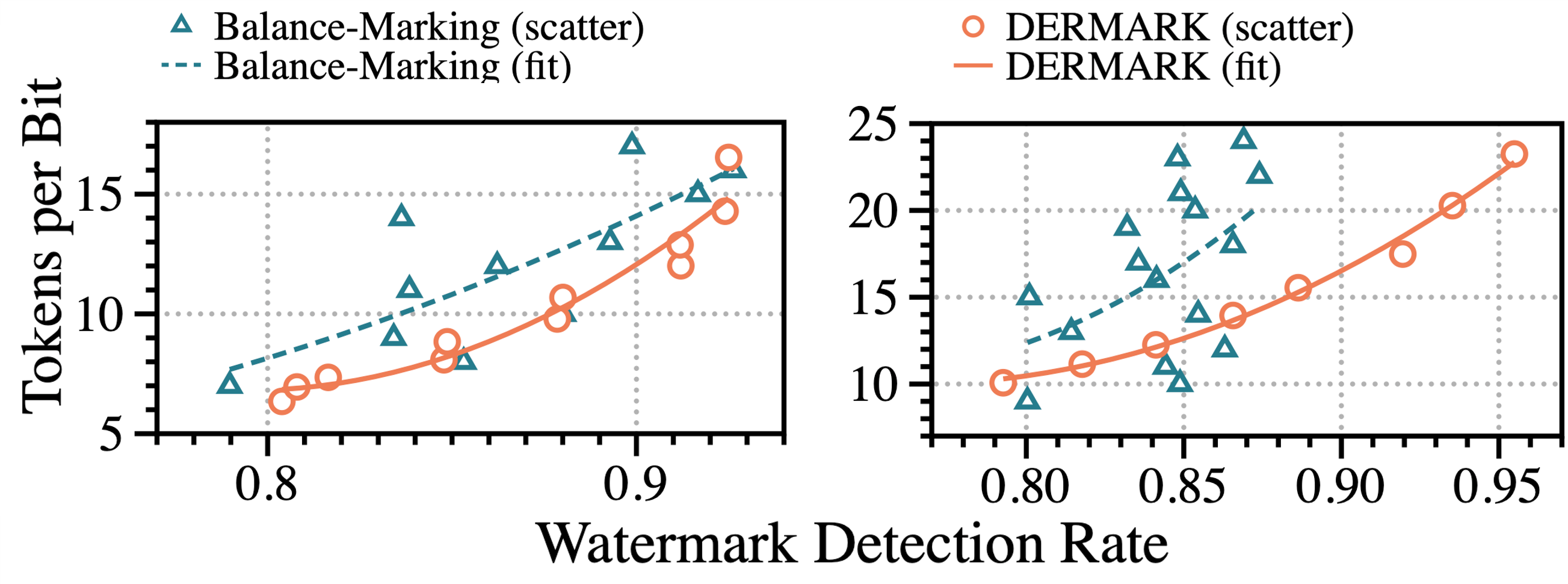}
        \caption{Capacity comparison across total dataset on \texttt{OPT-1.3b} (left) and \texttt{LLaMA-2-7b} (right).}
        \label{fig:noattack}
    \end{subfigure}
    \begin{subfigure}[b]{0.48\linewidth}
        \includegraphics[width=\textwidth]{ 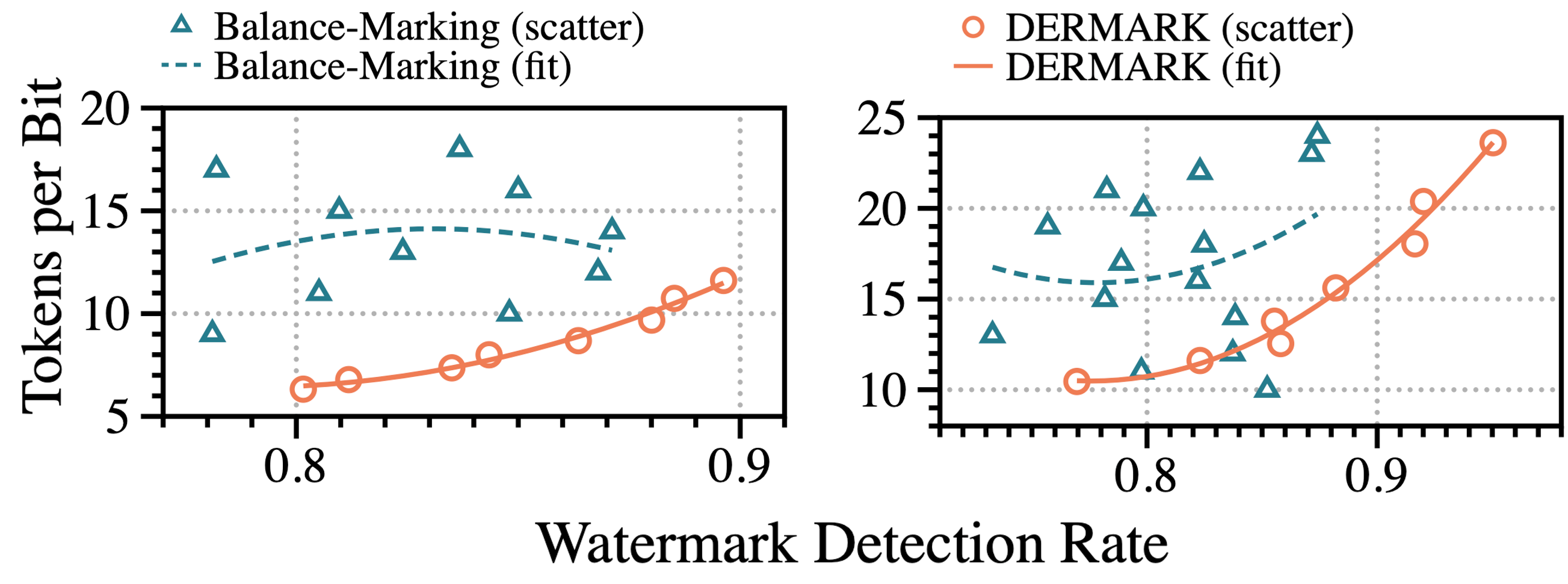}
        \caption{Capacity comparison across low-entropy dataset on \texttt{OPT-1.3b} (left) and \texttt{LLaMA-2-7b} (right).}
        \label{fig:bottom}
    \end{subfigure}
    \caption{Capacity Comparison. Each scatter point in figure is measured over 500 samples under different preset $\alpha$ values.}
    \label{fig:capacity}
\end{figure*}

\section{Experiments}
In this section, we conduct extensive experiments to demonstrate the superior performance of DERMARK in terms of watermark capacity, efficiency, and robustness.


\subsection{Experimental Setup}
All experiments are conducted using two widely adopted language models: \texttt{OPT-1.3b}~\cite{zhang2022opt} and \texttt{LLaMA-2-7b}~\cite{touvron2023llama}. For evaluation, we follow prior work by leveraging the news-like subset of the C4 dataset~\cite{raffel2020exploring} as input prompts and set the watermark strength $\delta$ to 1.
During inference, prompts are randomly sampled from this subset and truncated to the first 100 tokens. Token generation is performed via multinomial sampling, with a repetition penalty of 1.5 to encourage lexical diversity and mitigate mode collapse.
$\alpha$ is set within the range $[0.8, 0.99]$, $\beta$ is fixed at 34, and $\lambda$ is set to $\alpha \cdot (\Phi^{-1}(\alpha))^2$. As a baseline, we adopt Balance-Marking~\cite{WangYC0LM0024}, a state-of-the-art multi-bit watermarking approach, using \texttt{GPT-2} as its auxiliary model.
All experiments were performed multiple times on a server running Ubuntu 24.04.2 LTS and averages were calculated as results. The hardware environment consists of an Intel Xeon Gold 6426Y CPU and four NVIDIA A100-SXM4-80GB GPUs. Python 3.12.8 is used as the primary development environment.

\subsection{Watermark Capacity}
To demonstrate the superior watermark capacity of DERMARK, we first evaluate its performance against Balance-Marking on the evaluation dataset. Watermark capacity is assessed by varying the value of $\alpha$ in DERMARK. Specifically, for each value of $\alpha$, we prompt the LLM to generate 500 texts, from which we extract watermarks to compute the average watermark detection rate and the average number of tokens required to embed each bit. For the baseline, we similarly vary the segment length per bit to obtain comparable measurements.

The results are presented in Fig.~\ref{fig:noattack}. When achieving the same watermark detection rate, DERMARK requires, on average, 2.26 fewer tokens per bit on \texttt{OPT-1.3b}, and 3.7 fewer tokens per bit on \texttt{LLaMA-2-7b}, compared to the baseline. Moreover, on \texttt{LLaMA-2-7b}, the baseline fails to reach a 0.9 watermark detection rate even when the token-per-bit ratio is increased to 25, whereas DERMARK easily surpasses 95\%.

To further highlight DERMARK's capacity, we construct a "low-entropy" dataset comprising the worst-performing 25\% of evaluation samples (based on the watermark detection rate under the baseline). We then evaluate both methods on this challenging subset. As shown in Fig.~\ref{fig:bottom}, DERMARK significantly outperforms the baseline. It requires at least 5 fewer tokens per bit to achieve the same detection rate, yields more stable results with fewer outliers, and continues to improve beyond the baseline's performance plateau—capped at 0.88 on both \texttt{OPT-1.3b} and \texttt{LLaMA-2-7b}.

Finally, we evaluate the watermark capacity under varying watermark strength $\delta$. As shown in Table~\ref{table:traverse_delta}, DERMARK consistently outperforms the Balance-Marking across all $\delta$ values on both \texttt{OPT-1.3b} and \texttt{LLaMA-2-7b}.


\begin{table}[t]
\centering
\begin{tabular}{@{}c cc cc@{}}
\toprule
\multirow{2}{*}{$\delta$}  & \multicolumn{2}{c}{OPT-1.3b} & \multicolumn{2}{c}{LLaMA-2-7b} \\ \cmidrule(lr){2-3} \cmidrule(lr){4-5} 
      & \#BM      & DERMARK       & \#BM       & DERMARK        \\ \midrule
0.5   & $-$            & 35.12      & $-$             & 45.93          \\
0.8   & 16.00           & 12.59      & $-$             & 22.35          \\
1.0     & 12.80         & 10.78      & 41.89         & 16.25           \\
1.2   & 9.31          & 8.05      & 24.12          & 9.50            \\
1.5   & 7.13         & 5.12       & 17.39          & 7.37       \\
1.8   & 4.57          & 3.48      & 14.06         & 5.33       \\
2.0     & 3.32          & 2.97      & 6.13           & 4.10       \\ \bottomrule
\end{tabular}
\caption{Number of tokens required per watermark bit to achieve a 90\% watermark detection rate under varying watermark strengths $\delta$. \#BM denotes Balance-Marking; “$-$” indicates failure to reach the target accuracy.}
\label{table:traverse_delta}
\end{table}

\begin{figure*}[t]
  \centering
  \begin{subfigure}[b]{0.48\linewidth}
    \includegraphics[width=\linewidth]{ 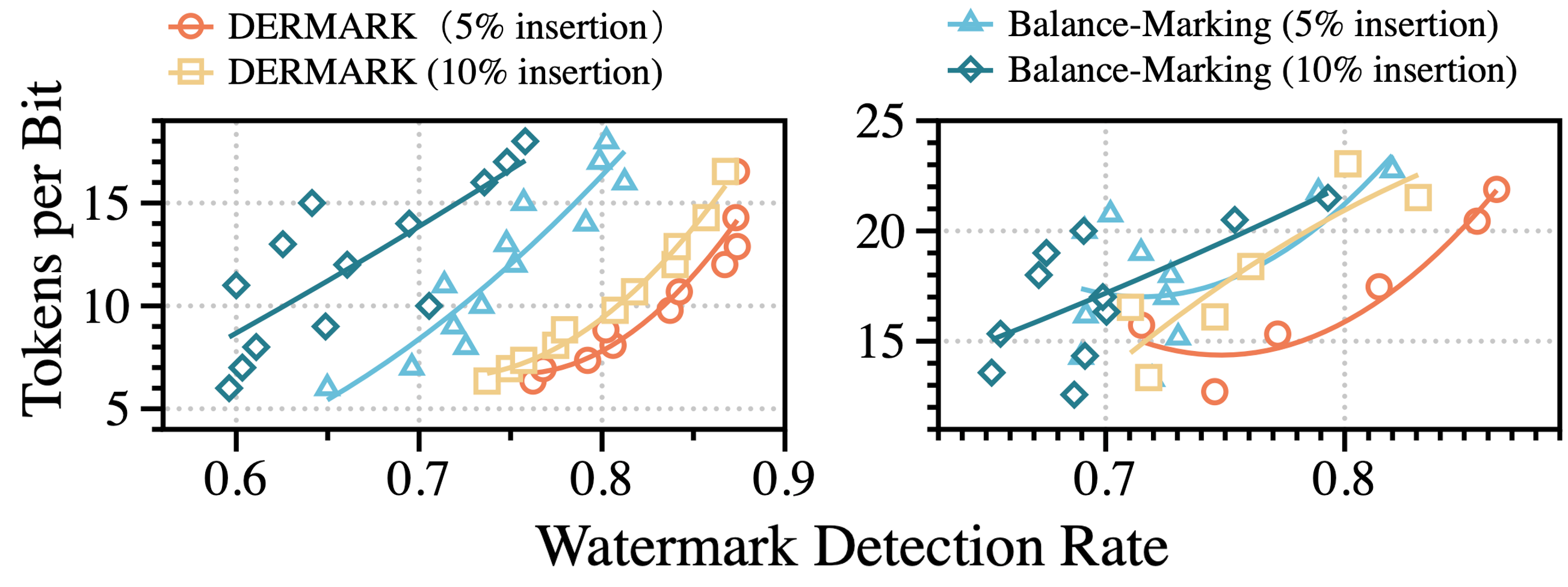}
    \caption{Insert attack comparison on \texttt{OPT-1.3b} (left) and \texttt{LLaMA-2-7b} (right).}
    \label{fig:insert}
  \end{subfigure}
  \hfill
  \begin{subfigure}[b]{0.48\linewidth}
    \includegraphics[width=\linewidth]{ 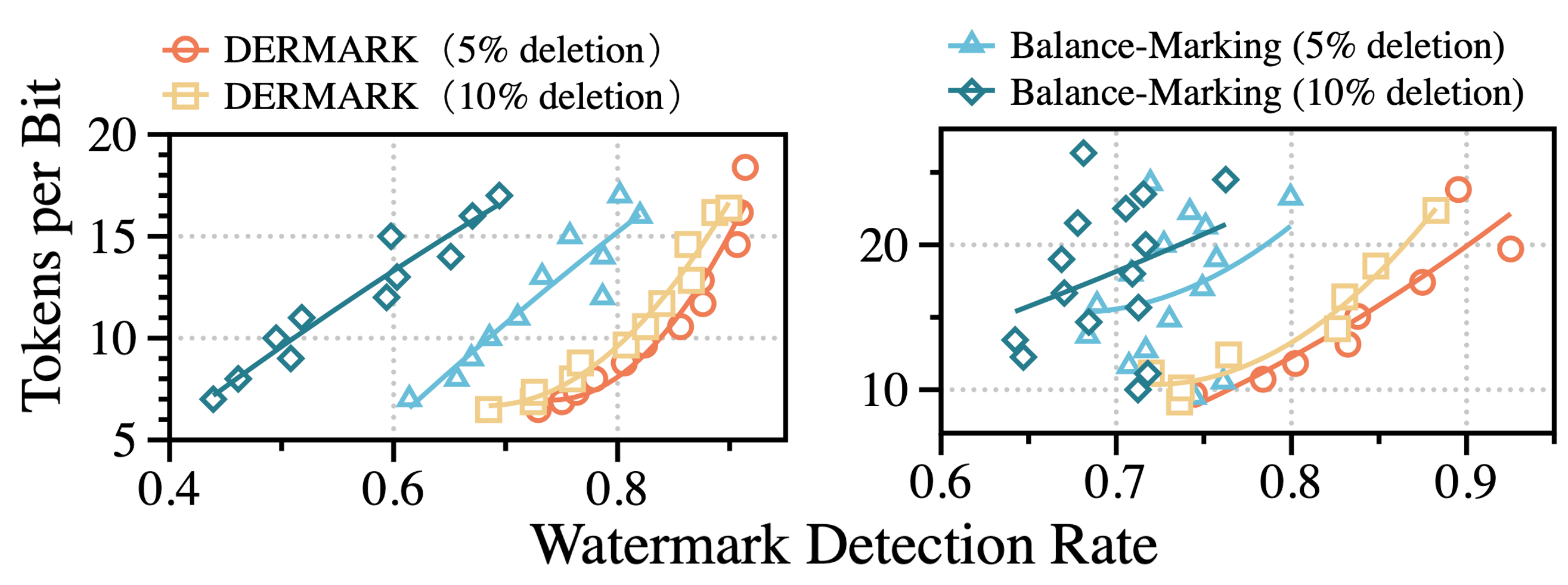}
    \caption{Delete attack comparison on \texttt{OPT-1.3b} (left) and \texttt{LLaMA-2-7b} (right).}
    \label{fig:delete}
  \end{subfigure}
  \caption{Robustness comparison. Each scatter point in figure is measured over 500 samples under different preset $\alpha$ values.}
  \label{fig:insert_delete}
\end{figure*}

\subsection{Impact of Watermarking on Text Quality}
We investigate how watermark strength affects the quality of generated text by varying the value of $\delta$. A total of 500 prompts are randomly sampled from the evaluation set. For each $\delta$ value, we use \texttt{OPT-1.3b} and \texttt{LLaMA-2-7b} to generate responses and evaluate the resulting texts using the Perplexity (PPL) metric computed by \texttt{LLaMA-2-30B}, which serves as a proxy model to assess fluency. Lower PPL values indicate higher textual quality.
Table~\ref{table:combined_ppl} reports the average PPL across all prompts for each watermarking method under different $\delta$ settings. Both methods exhibit similar trends in PPL as $\delta$ varies.
Overall, our approach achieves comparable text quality to the baseline.

\begin{table}[t]
\centering

\begin{tabular}{@{}r cl cl@{}}
\toprule
\multirow{2}{*}{$\delta$} & \multicolumn{2}{c}{OPT-1.3b}                & \multicolumn{2}{c}{LLaMA-2-7b}                 \\ \cmidrule(lr){2-3} \cmidrule(lr){4-5} 
                           & \#BM & DERMARK                  & \#BM & DERMARK                  \\ \midrule
0                           & 215             & 215 \hspace{0.5em} ($\hspace{0.5em}-\hspace{0.5em}$)                  & 132           & 132\hspace{1em}($\hspace{0.5em}-\hspace{0.5em}$)\\
0.5                        & 216             & 213 \hspace{0.5em} ($\downarrow \hspace{0.5em} 3$)      & 142           & 139 \hspace{0.5em} ($\downarrow \hspace{0.5em} 3$)  \\
0.8                        & 228             & 233 \hspace{0.5em} ($\uparrow \hspace{0.5em}  5$)      & 144           & 197 \hspace{0.5em} ($\uparrow 53$)  \\
1.0                        & 232             & 236 \hspace{0.5em} ($\uparrow \hspace{0.5em} 4$)      & 223           & 216 \hspace{0.5em} ($\downarrow \hspace{0.5em} 7$)  \\
1.2                        & 243             & 247 \hspace{0.5em} ($\uparrow \hspace{0.5em} 4$)      & 207           & 162 \hspace{0.5em} ($\downarrow 45$)\\
1.5                        & 224             & 225 \hspace{0.5em} ($\uparrow \hspace{0.5em} 1$)      & 248           & 169 \hspace{0.5em} ($\downarrow 79$)  \\
1.8                        & 253             & 262 \hspace{0.5em} ($\uparrow \hspace{0.5em} 9$)      & 169           & 218 \hspace{0.5em} ($\uparrow 49$)\\
2.0                        & 289             & 256 \hspace{0.5em} ($\downarrow 33$)      & 300           & 267 \hspace{0.5em} ($\downarrow 33$)   \\ \bottomrule
\end{tabular}%
\caption{Perplexity comparison under varying watermark strengths $\delta$. $\delta=0$ denotes unwatermarked model outputs.}
\label{table:combined_ppl}
\end{table}

\begin{figure}[t]
 \centering
    \includegraphics[width=\linewidth]{ 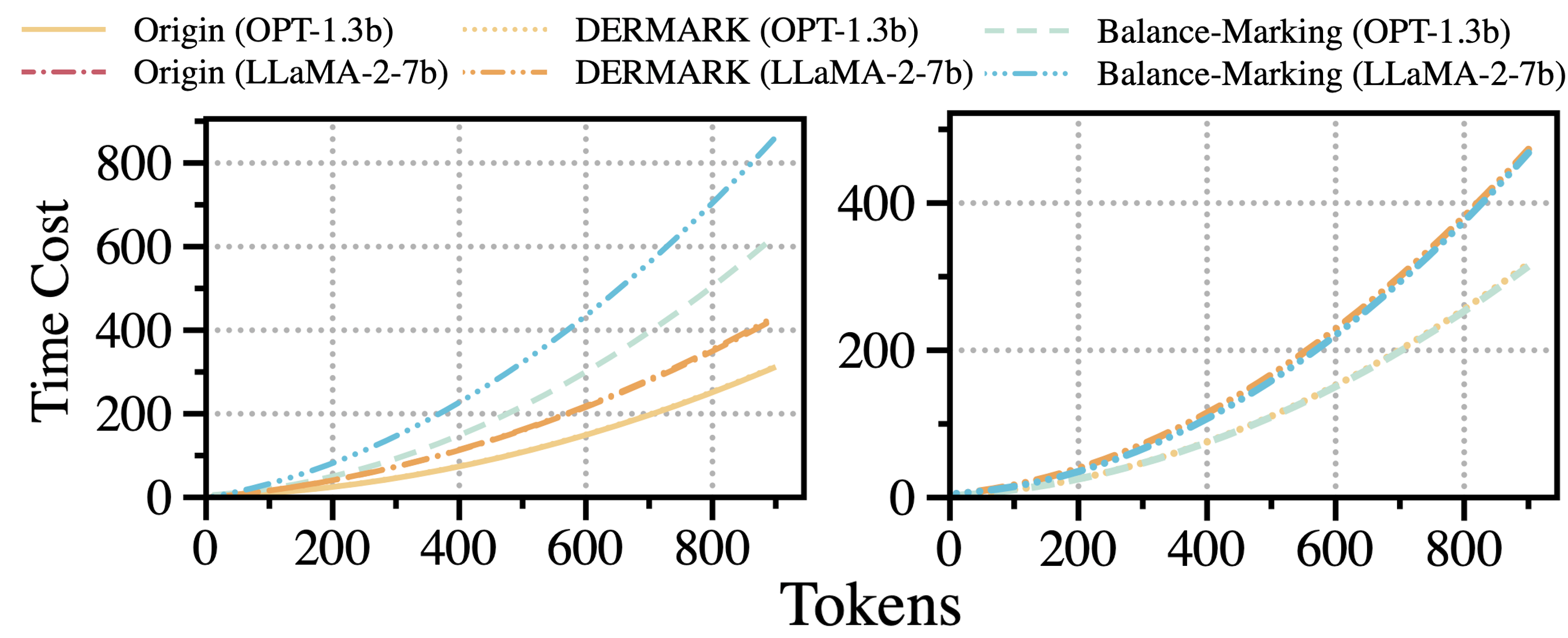}
    \caption{Comparison of time cost in text generation (left) and watermark extraction (right).}
    \label{fig:time_cost}
\end{figure}




\subsection{Efficiency}
We evaluated the text generation and watermark extraction efficiency of both Balance-Marking and DERMARK. For Balance-Marking, the LLM used for text generation also served as the auxiliary model, with a fixed watermark embedding length of 10 tokens per bit. For DERMARK, we set the error rate to 10\%. Both methods were tested with a fixed prompt length of 100 tokens.

As shown in Fig.\ref{fig:time_cost}, during the watermark embedding phase, DERMARK demonstrates almost identical time overhead compared to the raw outputs of models, which is significantly lower ($<50\%$) than that of Balance-Marking.  
Overall, DERMARK incurs lower time costs and is more suitable for large-scale text watermarking tasks.

\subsection{Robustness}
We evaluate the robustness of DERMARK against three types of attacks: insertion, deletion, and watermark erasure to demonstrate that DERMARK maintains high watermark detection accuracy under such attacks.

\paragraph{Insertion Attacks.}
We assess robustness to insertion by adding 5\% and 10\% random tokens to the generated texts. As shown in Fig.~\ref{fig:insert}, DERMARK consistently achieves higher watermark detection rates with fewer tokens per embedded bit compared to Balance-Marking. The scatter plots and fitted trend lines indicate that DERMARK's performance curve lies below that of the baseline, underscoring its resilience to insertion noise.

\paragraph{Deletion Attacks.}
To evaluate deletion robustness, we randomly remove 5\% and 10\% of tokens from generated outputs. As illustrated in Fig~\ref{fig:delete}, DERMARK outperforms Balance-Marking, achieving a lower token-per-bit ratio while maintaining higher detection accuracy. The performance gap widens at higher thresholds, highlighting DERMARK's superior robustness under adversarial conditions.

\paragraph{Erasure Attacks.}
The core mechanism of DERMARK lies in its segmentation process rather than the embedding step. Within each segment, a one-bit watermarking scheme is employed, allowing DERMARK to leverage existing one-bit watermarking techniques originally designed to enhance robustness against watermark erase attacks.
For example, DERMARK imposes no constraints on the choice of watermarking strategy, ensuring compatibility with both semantically neutral watermarking strategy~\cite{hucwwzh24} and flexible watermarking strategy~\cite{wanggwy25}. This design choice contributes to improved resilience against paraphrasing attacks.

\section{Conclusion}
In this work, we investigate strategies for dynamically assigning text segments to individual watermark bits and propose DERMARK, a lightweight framework that satisfies dynamic embedding requirements with negligible overhead. Extensive experiments on watermark capacity, efficiency, and robustness demonstrate that DERMARK provides a practical and effective solution for multi-bit watermark embedding.


\bibliography{aaai2026}

\newpage

\appendix

\section*{Limitations}
Our method has two primary limitations. First, it is applicable only to autoregressive large language models, with the watermark embedded solely in textual outputs. It is not designed for models generating non-textual content, such as those used in text-to-image generation. Second, the method is not robust against aggressive text modifications, such as copy-paste attacks, heavy perturbations, or more sophisticated text sanitization techniques. These limitations are common to all multi-bit watermarking approaches for large language models—particularly the second one. This vulnerability primarily arises from the dispersion of multiple bits throughout the text, making them more susceptible to bit flips under substantial edits, which in turn hinders watermark extraction.

Nevertheless, we urge reviewers to focus on our core innovation: the dynamic segmentation mechanism. This design effectively leverages the text's capacity to carry watermarks, enabling high-confidence multi-bit embedding without significantly affecting semantic quality. Moreover, in terms of robustness, our method demonstrates substantial improvements over the baseline across a range of adversarial scenarios.

%

\section{Baseline Selection}
We acknowledge that the baseline method used in our experiments is not the most recent work~\cite{yoo-etal-2024-advancing} in the field. 
In contrast, we select the SOTA method Balance-Marking as the baseline.
This choice was made intentionally to ensure a clear and fair comparison focused on core methodological differences.

Here we explain why we choose Balance-Marking as the baseline instead of \textit{MPAC}~\cite{yoo-etal-2024-advancing}:

\textbf{(1) Methodological Innovation in Balance-Marking.}
Balance-Marking introduces a principled and effective approach to constructing watermark vocabularies. It selects top-ranked tokens under an auxiliary model until a predefined cumulative probability threshold (e.g., 0.5) is reached, forming the green list.
This design explicitly leverages token likelihoods in a probabilistic manner, thereby improving both the efficiency and robustness of watermark embedding.
In contrast, \textit{MPAC} merely modifies the position allocation strategy of the existing single-bit method \textit{KGW} when extended to the multi-bit setting. It does not propose any novel mechanism for vocabulary selection or multi-bit encoding. As such, it lacks methodological advancement beyond the use of positional cues.

\textbf{(2) Incomplete and Biased Use of Balance-Marking in \textit{MPAC}.}
Although \textit{MPAC} includes Balance-Marking as a baseline in its experiments, it does not faithfully reproduce the original method.
Instead, it employs a simplified variant, citing incompatibility between the tokenizer of the main model and that of the auxiliary model required by Balance-Marking.
However, this issue was explicitly addressed in the original Balance-Marking paper, which proposes a straightforward solution: using the target model itself as the auxiliary model to ensure tokenizer compatibility.
By disregarding this solution, \textit{MPAC} ends up using a weakened version of Balance-Marking, resulting in a baseline comparison that underrepresents its true performance.

\textbf{(3) Misleading Performance Gains via Aggressive Parameter Tuning.}  
In its experiments, \textit{MPAC} sets the watermark strength parameter $\delta$ to 2, which is significantly higher than the widely adopted default value of $\delta = 1$ in most prior watermarking studies.  
It is important to note that in \textit{KGW}—the base method upon which \textit{MPAC} builds—the watermarking strength scales exponentially with $\delta$, i.e., the intensity is proportional to $e^{\delta}$.  
Thus, increasing $\delta$ from 1 to 2 results in approximately a $7.39\times$ increase in watermark strength, artificially boosting detectability.

However, this apparent gain comes at the cost of text quality and practicality, making such parameter settings unsuitable for fair evaluation.  
For instance, under $\delta = 2$, \textit{MPAC} achieves a watermark detection accuracy of 0.899 when embedding 24 bits in 250 tokens, equivalent to 10.42 tokens per bit.  
In contrast, our \textit{Balance-Marking} method reaches 0.900 accuracy at just 6.13 tokens per bit, and our proposed \textit{DERMARK} further improves this to 0.900 accuracy with only 4.10 tokens per bit—demonstrating markedly better capacity-efficiency tradeoffs under standard $\delta = 1$ settings.

Moreover, since \textit{MPAC} only modifies the position allocation mechanism without enhancing the vocabulary construction or bit encoding strategies, its ability to simultaneously improve watermark capacity and generation quality is inherently limited.  
Consequently, the observed performance gains are largely attributable to aggressive hyperparameter tuning rather than substantive methodological improvements.

In summary, we select Balance-Marking as our baseline due to its clear methodological contributions, faithful and complete implementation, and adherence to fair and standardized evaluation protocols. While it is not the most recent work in the field, it remains the most effective among existing multi-bit watermarking methods in terms of robustness and detection performance. This choice ensures a more meaningful and rigorous comparison within the multi-bit watermarking landscape.

\section{Detailed Proofs}
\subsection{Proof of Lemma 1} \label{sec:lamma_1}
\begin{proof}
Taking $m_k=1$ as an example.

For $s_i\in G$, the predicted probability of $s_i$ is:

$p_{Gi}'^{(t)}(s_i \mid\mathbf{x}^{prompt}, \mathbf{s}^{:t-1}) = \frac{e^{l_{i}^{(t)} + \delta}}{\sum_{s_i \in G}e^{l_{j}^{(t)} + \delta} + \sum_{s_i \in R}e^{l_j^{(t)}}}.$

For $s_i\in R$, the predicted probability of $s_i$ is:

$p_{Ri}'^{(t)}(s_i \mid\mathbf{x}^{prompt}, \mathbf{s}^{:t-1}) = \frac{e^{l_i^{(t)}}}{\sum_{s_i \in G}e^{l_{j}^{(t)} + \delta} + \sum_{s_i \in R}e^{l_j^{(t)}}}.$

It is easy to see that $P_G'^{(t)}$ is the sum of $p_{Gi}'^{(t)}$:  

$P_G'^{(t)} = \sum_{s_i \in G} p_{Gi}'^{(t)}(s_i \mid\mathbf{x}^{prompt},  s^{:t-1}) \\
\Rightarrow P_G'^{(t)} = \frac{e^\delta \sum_{s_i \in G} e^{l_i^{(t)}}}{e^\delta \sum_{s_i \in G} e^{l_j^{(t)}} + \sum_{s_i \in R} e^{l_j^{(t)}}}=\frac{e^\delta \cdot P_G^{(t)}}{e^\delta \cdot P_G^{(t)} + (1 - P_G^{(t)})}.$

\end{proof}

\subsection{Proof of Lemma 2} \label{sec:lamma_2}
\begin{proof}
As defined above, $X$ counts the number of aligned tokens in the segment $S$, where each $P'^{(t)} \in (0, 1)$ indicates whether the $t$-th token matches the bit embedding requirement. Since token generation is independent and the target list is re-sampled at each time step, the random variables $\{P'^{(t)}\}_{t=1}^N$ are mutually independent but not identically distributed.

Therefore, $X$ follows a \textbf{Poisson binomial distribution}, i.e., the sum of independent (but non-identical) Bernoulli variables.

From the properties of the Poisson binomial distribution, we have:
\[
\mathbb{E}[X] = \sum_{t=1}^N \mathbb{E}[P'^{(t)}],
\mathrm{Var}(X) =\sum_{t=1}^N \left( \mathbb{E}[P'^{(t)}] - \mathbb{E}^2[P'^{(t)}] \right).
\]
By the \textbf{Central Limit Theorem} (CLT) for Poisson binomial distributions (cf.~\cite{tang2023poisson}), as $N$ becomes large, the distribution of $X$ can be approximated by a normal distribution:
\[
X \sim \mathcal{N}(\mu, \sigma^2), \quad \text{where } \mu = \mathbb{E}[X], \; \sigma^2 = \mathrm{Var}(X).
\]

Since $T$ is a linear transformation of $X$, it also approximately follows a normal distribution:
\[
T \sim \mathcal{N} \left( \frac{\mu}{N}, \frac{\sigma^2}{N^2} \right).
\]
\end{proof}

\section{Algorithm}

\subsection{Watermarked Text Generation}

To embed a watermark into a sequence of tokens, we propose a segmentation approach founded on the following inequality:

\begin{gather} 
\Phi^{-1}(1 - \alpha) \leq \frac{\frac{1}{2} - \mathbb{E}(T)}{\sqrt{\text{Var}(T)}}. 
\label{eq:7} 
\end{gather}

This inequality serves as the foundation for identifying segment boundaries during the watermark extraction process. Below, we present Algorithm \ref{alg:1}, which details the watermark extraction procedure.
The algorithm can be divided into several logical groups:
\begin{algorithm}[t]
    \caption{Text Generation with DERMARK}
    \label{alg:1}
    \renewcommand{\algorithmicrequire}{\textbf{Input:}}
    \renewcommand{\algorithmicensure}{\textbf{Output:}}
    
    \begin{algorithmic}[1]
        \REQUIRE Prompt: $\mathbf{x}^{prompt} $, Bit Error Rate: $\alpha$, Binary message: $\mathcal{M}$, Maximum generation length: L
        \ENSURE A token string that carries  $\mathcal{M} $: $\mathcal{S}$  
        \STATE Append the $(1-\mathcal{M}[k-1])$ token to $\mathcal{M}$ 
        \STATE $i \gets 0$
        \STATE $P \gets \{\}$
        \FOR{$t = 0,1,\dots,L$ do}
            \STATE $k \gets \mathcal{M}[i]$  
            \STATE Apply the LLM to prior tokens $\{\mathbf{x}^{prompt}, s^{(0)},\dots, s^{(t-1)}\}$ to get a logit vector $\mathbf{L}(\mathbf{x}^{prompt},\mathbf{s}^{:t})$
            \STATE Using $s^{(t - 1)}$to seed a randomly partition the vocabulary into two identically sized sets: $G$, $R$.
             \STATE Based on bit $k$, choose either $G$ or $R$ to apply gain enhancement.
            \STATE Using soft watermark method, watermark the token  sample the next token, $s^{(t)}$
            \STATE Substitute $G$, $R$, $\alpha$, and $l^{(t)}$ into Eq.\eqref{eq:7} to compute $\mathbb{E}[P'^{(t)}]$.
            \STATE Add  $\mathbb{E}[P'^{(t)}]$ to set $P$
            \IF{\eqref{eq:7}is satisfied on $P$}
               \STATE $P \gets \{\}$
               \IF{$i \leq k$}
                \STATE $i \gets  i+1$
                \ENDIF
            \ENDIF
        \ENDFOR
        \RETURN Outputs
    \end{algorithmic}
\end{algorithm}

    \paragraph{Input and Output:} 
    The input includes a prompt $\mathbf{x}^{prompt}$, a binary message $\mathcal{M}$, a predefined Bit Error Rate (BER) $\alpha$, and a maximum generation length $L$. 
    The output is a generated token sequence $\mathcal{S} = \{s^{(0)}, \ldots, s^{(L)}\}$ that encodes the binary message $\mathcal{M}$ using dynamic segment marking.
    
    \paragraph{Initialization (Lines 1-3):}
    An empty set $P$ is initialized to temporarily store the computed expectations $\mathbb{E}[P'^{(t)}]$. 
    The index counter $i$ is set to $0$, which tracks the current position in the binary message $\mathcal{M}$.
    
    \paragraph{Token Processing (Lines 5-11):}For each step $t$ in the token generation process:
        \begin{itemize}
            \item Retrieve the current bit $k = \mathcal{M}[i]$ (Line 5).
            \item Compute the logit vector $\mathbf{L}(\mathbf{x}^{prompt}, \mathbf{s}^{:t})$ based on the sequence of prior tokens, using the large language model (Line 6).
            \item Partition the vocabulary into two equal subsets $G$ and $R$, using the previous token $s^{(t-1)}$ as a random seed (Line 7).
            \item Apply gain enhancement to $G$ or $R$ depending on the value of $k$ to modulate the probability distribution (Line 8).
            \item Use a soft watermarking method to sample the next token $s^{(t)}$ from the adjusted probability distribution (Line 9).
            \item Compute $\mathbb{E}[P'^{(t)}]$ using $G$, $R$, $\alpha$, and the current logit vector, and append it to the set $P$ (Line 10).
        \end{itemize}
    
    \paragraph{Segmentation and Message Update (Lines 12-17):}
    \begin{itemize}
        \item Periodically check whether the consistency condition in Eq.\eqref{eq:7} is satisfied for the accumulated probabilities in $P$ (Line 13).
        \item If the condition is satisfied:
        \begin{itemize}
            \item Compare the counters $G$ and $R$ to determine the corresponding bit for the segment, and update the binary message $\mathcal{M}'$ (Lines 14-16).
            \item Increment the index counter $i$ to proceed to the next bit in $\mathcal{M}$ (Line 17).
        \end{itemize}
    \item Reset $P$, $G$, and $R$ for processing the next segment (Lines 18-19).
    \end{itemize}

\begin{algorithm}[t]
    \caption{Robust Segmentation}
    \label{alg:2}
    \renewcommand{\algorithmicrequire}{\textbf{Input:}}
    \renewcommand{\algorithmicensure}{\textbf{Output:}}
    \begin{algorithmic}[1]
        \REQUIRE A token string that carries  $\mathcal{M} $: $\mathcal{S} =\{s^{(0)}, \ldots, s^{(L)}\}$, Bit Error Rate: $\alpha$ , bit number:k
        \ENSURE Segmentation for the Token string:Segments
        \STATE Calculate the membership of each token $s^{(t)}$ in either $V_0$ or $V_1$, and store the result in bit
        \STATE Calculate the $P^{(t)}_0$ for each token $s^{(t)}$, and store the result in $P$
        \STATE Initialize two 2D arrays $\mathcal{L}$ and $\text{prev}$ of size $\text{k} \times (\text{L}+1)$
        \FOR{$p = 1,2,\dots ,\text{k}$ do}
            \FOR{$q = 1,2,\dots ,\text{L }+ 1$ do}
                \STATE $\mathcal{L}[p][q] \gets [\infty]* (L+1)$ 
            \ENDFOR
        \ENDFOR
        \FOR{$t = 1,2,\dots ,k$ do}
            \FOR{$b = 1,2,\dots ,L$ do}
                \FOR{$a = 0,1,\dots ,b-1$ do}
                    \STATE  compute $\text{cost}[a][b]$
                    \IF{$\mathcal{L}[t-1][a]+ \text{cost}[a][b] < \mathcal{L}[t][b]$}
                        \STATE $\mathcal{L}[t][b] \gets \mathcal{L}[t-1][a]+\text{cost}[a][b] $
                        \STATE $\text{prev}[t][b] \gets a$
                    \ENDIF
                \ENDFOR
            \ENDFOR
        \ENDFOR
        \STATE $\text{Segments} \gets []$ 
        \STATE $\text{current} \gets n$
        \WHILE{$\text{current} > 0$}
            \STATE $\text{start} \gets \text{Seg}[\text{current}]$
            \STATE Append $(\text{start}, \text{current})$ to Segments
            \STATE $\text{current} \gets \text{start}$
        \ENDWHILE
        \RETURN Segments
    \end{algorithmic}
\end{algorithm}
\subsection{Watermark Extraction}
Algorithm \ref{alg:2} outlines a method for reconstructing text into $k$ segments during watermark embedding, with padding tokens removed, through the minimization of a loss function.

    \paragraph{Input and Output:}
    The input is a token string $\mathcal{S} = \{s^{(0)}, \ldots, s^{(L)}\}$ that carries the binary message $\mathcal{M}$, a predefined Bit Error Rate (BER) $\alpha$, and the number of segments $k$. 
    The output is the segmentation of the token string, represented as a list of segment boundaries.

    \paragraph{Preprocessing (Lines 1-3):}
    \begin{itemize}
        \item Compute the membership of each token $s^{(t)}$ in either $V_0$ or $V_1$, and store the results as binary values in the array `bit` (Line 3).
        \item Compute $P^{(t)}_0$, the probabilities associated with $s^{(t)}$, and store these values in the array $P$ (Line 4).
        \item Initialize two 2D arrays: $\mathcal{L}$ and $\text{prev}$ of size $k \times (L+1)$ to store the cumulative loss for each segmentation scenario and prefix information
    \end{itemize}

    \paragraph{Initialization of Loss Table (Lines 4-8):}Set all entries in $\mathcal{L}$ to infinity, ensuring that only valid segmentation paths are selected during subsequent calculations (Lines 6-8).

    \paragraph{Dynamic Programming for Segmentation (Lines 9-19):}
    \begin{itemize}
        \item For each segment index $t$ and each token position $i$, evaluate all possible preceding segment boundaries $j$ (Lines 10-12).
        \item Compute $\text{cost}[a][b]$, which represent the loss for the segment $(a, b)$ (Lines 13-14).
        \item If the total loss for the current segmentation path, $\mathcal{L}[t-1][a] + \text{cost}[a][b]$, is smaller than the existing loss at $\mathcal{L}[t][b]$, update $\mathcal{L}[t][b]$ and record the prefix position $j$ in $\text{prev}[t][b]$ (Lines 15-18).
    \end{itemize}

    \paragraph{Backtracking to Extract Segments (Lines 20-26):}
    \begin{itemize}
        \item Start from the last position of the token string and iteratively trace back using the boundary indices stored in $\text{prev}[t][b]$ to recover all segment boundaries (Lines 23-26).
        \item Append each segment as a pair of start and end indices $(\text{start}, \text{current})$ to the `Segments` list (Line 25).
    \end{itemize}

    \paragraph{Termination (Line 27):}
    Return the final list of segment boundaries, `Segments`, which represents the robust segmentation of the token string based on the minimum cumulative loss.

\section{DERMARK in Practical Scenarios}

\subsection{Text Length}

DERMARK is capable of efficiently embedding multi-bit watermarks into text generated by large language models (LLMs). However, in practical use, the text length and watermark capacity may not ideally align with the embedding requirements of DERMARK. Therefore, in this section, we discuss two extreme cases: excessively short and excessively long text lengths.


\textbf{Case 1: Text length is insufficient to embed the full watermark.}  
In practice, when the text is too short or exhibits low entropy, its capacity to carry watermark bits becomes severely limited. In such cases, it is still feasible to dynamically adjust the watermark strength $\delta$, allowing for a trade-off between semantic fidelity and successful watermark embedding. However, in the worst-case scenario, DERMARK may still be unable to embed the entire preset watermark.

Importantly, due to its ability to adapt dynamically to local entropy, DERMARK achieves strictly higher embedding efficiency compared to fixed-length baseline methods. Consequently, if DERMARK fails to embed the watermark, fixed-length segmentation approaches are even more likely to fail or result in greater distortion.

\textbf{Case 2: Text length is significantly longer than the watermark bit length.}  
Such cases are conceptually inconsistent with the multi-bit watermarking goal, which aims to encode as much identifying information as possible. Embedding only a small number of bits into a very long text underutilizes the available capacity and runs counter to the intent of multi-bit watermarking. Nevertheless, if such a scenario arises, our method can still embed the watermark with minimal semantic disruption, concentrating on embedding within short segments and leaving most of the text untouched.

By contrast, baseline methods such as Balance-Marking distribute the watermark across the entire text to improve robustness. While this approach may enhance resilience to editing, it inevitably increases the overall semantic distortion. In such cases, choosing between the two approaches depends on application priorities: our method favors semantic preservation and high information density, while baselines may be preferred when robustness is paramount and semantic drift is tolerable.

Overall, our method targets a core challenge of multi-bit watermarking: how to maximize watermark density and semantic fidelity in capacity-limited settings. In capacity-abundant scenarios, the trade-offs become more nuanced and context-dependent, with our method and baseline approaches each offering unique advantages.

\subsection{Variable-Length Watermark}
In the context of this study, we consider that watermarks are embedded into the outputs of different LLMs as user-specific identifiers. Consequently, we assume that multi-bit watermarks are of fixed length $K$. However, in practice, there may be cases where watermarks are of variable length, meaning that the goal is to embed as many watermark bits as possible into the LLM-generated text.

In this context, the advantages of DERMARK can be fully leveraged. The method enables real-time assessment of the generated text to determine whether there is sufficient capacity to embed watermark bits. This dynamic decision-making process ensures that each watermark bit is successfully embedded without significantly affecting the semantics of the text, while simultaneously maximizing the number of watermark bits embedded within the text.

\section{Experimental Instructions and Supplements}
To facilitate reproducibility and further research, we publicly release the complete implementation of DERMARK at the following anonymous repository:\begin{links}
    \link{Code}{https://anonymous.4open.science/r/DERMARK-2F45}
\end{links}

\subsection{Explanation and Selection of Hyperparameters}

$\alpha$. The parameter $\alpha$ represents the confidence threshold used in the embedding inequality (Eq. (5)), controlling how strictly a segment must satisfy the bit alignment condition. A smaller $\alpha$ leads to more conservative segment acceptance, requiring a higher proportion of aligned tokens. In our experiments, we systematically varied $\alpha$ within the range $[0.8, 0.95]$ to analyze its effect on watermark capacity. For all other experiments, we report results across this range to capture the full performance profile under different confidence levels.

$\delta$. The watermark strength $\delta$ determines the magnitude of the bias applied to the logits of green or red tokens during watermark embedding. A higher $\delta$ increases the separation between token distributions, thus improving embedding robustness and detection reliability. In our capacity and quality experiments, we sweep over a wide range of $\delta$ values to assess its impact. For all other evaluations, we follow the default setting in KGW and fix $\delta = 1$, which provides a good balance between imperceptibility and robustness.

Our watermark extraction relies on the inequality condition in Eq. (5) to identify valid watermark segments. However, since the generated tokens are discrete and sampled probabilistically, the observed token proportions in each segment cannot strictly satisfy the inequality. Even in clean, watermarked texts, a small positive bias is inevitable, as the sampling process introduces stochastic deviation. Furthermore, if the text has undergone insertion or deletion attacks, additional perturbations will lead to unknown deviations in token distributions. To account for these effects, we introduce a bias-correction term $\epsilon_s$ in the extraction loss (cf.~Eq. (7)), which tolerates small violations of the inequality caused by stochasticity or editing.

In addition to inequality violations, another source of systematic deviation arises from the fact that each token only has a $(1 - \alpha)$ probability of being sampled from the intended list (e.g., the green list for embedding bit-1). As a result, even under ideal segmentation, each segment may naturally contain an $\alpha$-fraction of mismatched (e.g., red list) tokens. To accommodate this, we introduce a second correction term $\epsilon_c$(cf.~Eq. (8)), which adjusts for the expected level of token-type noise under the bit embedding distribution.

Rather than setting $\epsilon_s$ and $\epsilon_c$ manually, we adopt an iterative estimation strategy for both. In each iteration, we first compute the optimal segmentation via dynamic programming. Then, $\epsilon_s$ is updated based on the average residual in Eq. (5) across all segments, while $\epsilon_c$ is updated based on the observed deviation between empirical and expected token-type distributions. This process is repeated until both parameters converge, ensuring robustness to both stochastic and adversarial perturbations.

We also include other hyperparameters to handle estimation stability and loss balancing. The smoothing parameter $\lambda$ is used to stabilize the estimation of token-type priors ($P_{k1}$ and $P_{k0}$) in Eq. (6), particularly in short segments. This parameter is selected through manual tuning on validation data to balance numerical stability and estimation accuracy. 

Finally, $\beta$ is a tunable coefficient that controls the relative weight of the two components in Eq. (9), balancing segment-wise fidelity and global bit alignment. After extensive experimentation, we found that setting $\beta$ to 34 yields the best overall performance.

\subsection{Performance on high-entropy datasets}
To strengthen the rigor of our evaluation, we construct a subset consisting of the top 25\% of datasets with the highest watermark detection rates under the baseline, which we refer to as the "high-entropy dataset." As illustrated in Fig.~\ref{fig:top}, while DERMARK on \texttt{OPT-1.3b} exhibits comparable performance to Balance-Marking, DERMARK on \texttt{LLaMA-2-7b} significantly outperforms Balance-Marking. This demonstrates that, as the model size increases, even for texts with high watermark capacity, Balance-Marking is more significantly impacted, whereas DERMARK remains largely unaffected. This observation strongly underscores the practical advantages of DERMARK.
In fact, the high-entropy dataset corresponds precisely to the case discussed earlier, where the generated text possesses sufficient capacity to embed multiple watermark bits. Most samples in this subset offer ample entropy, allowing reliable and effective multi-bit watermark embedding.

\begin{figure}[t]
 \centering
    \includegraphics[width=\linewidth]{ 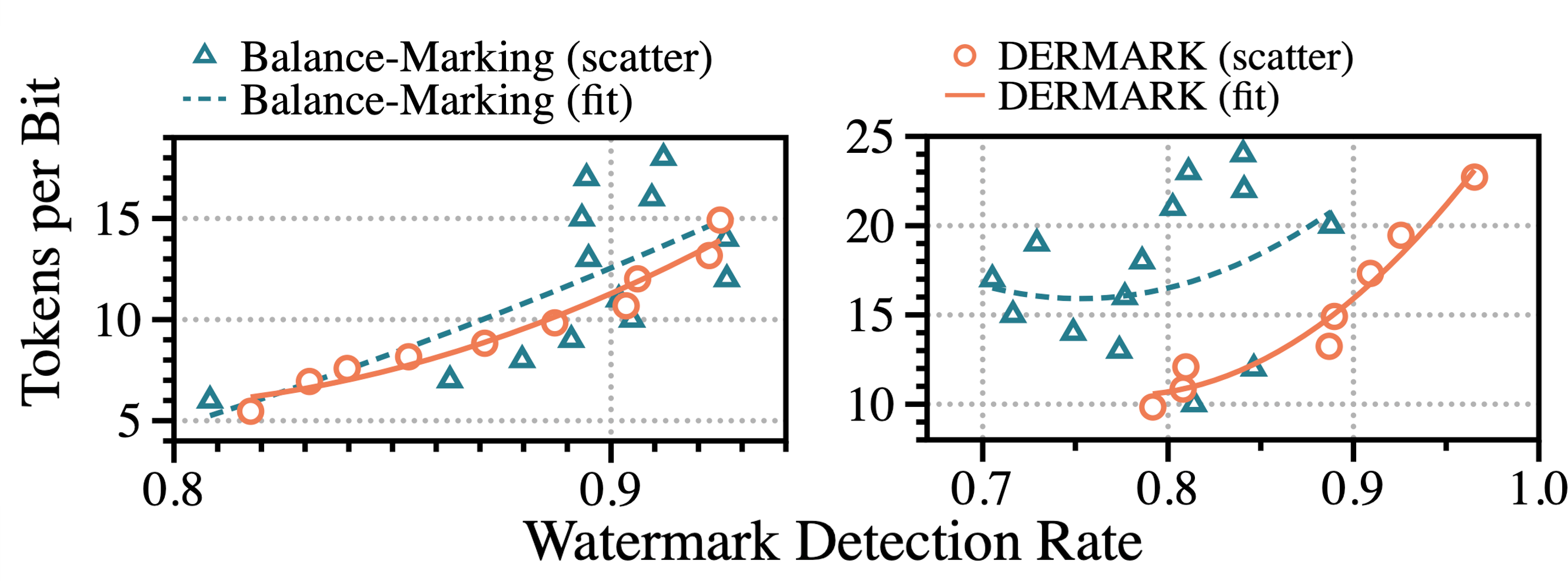}
    \caption{Capacity Comparison across high-entropy dataset on \texttt{OPT-1.3b} (left) and \texttt{LLaMA-2-7b} (right).}
    \label{fig:top}
\end{figure}

\subsection{Large-scale Model Deployment}

\begin{figure}[t]
    \centering
    \begin{subfigure}[b]{0.46\linewidth}
        \includegraphics[width=\linewidth]{ 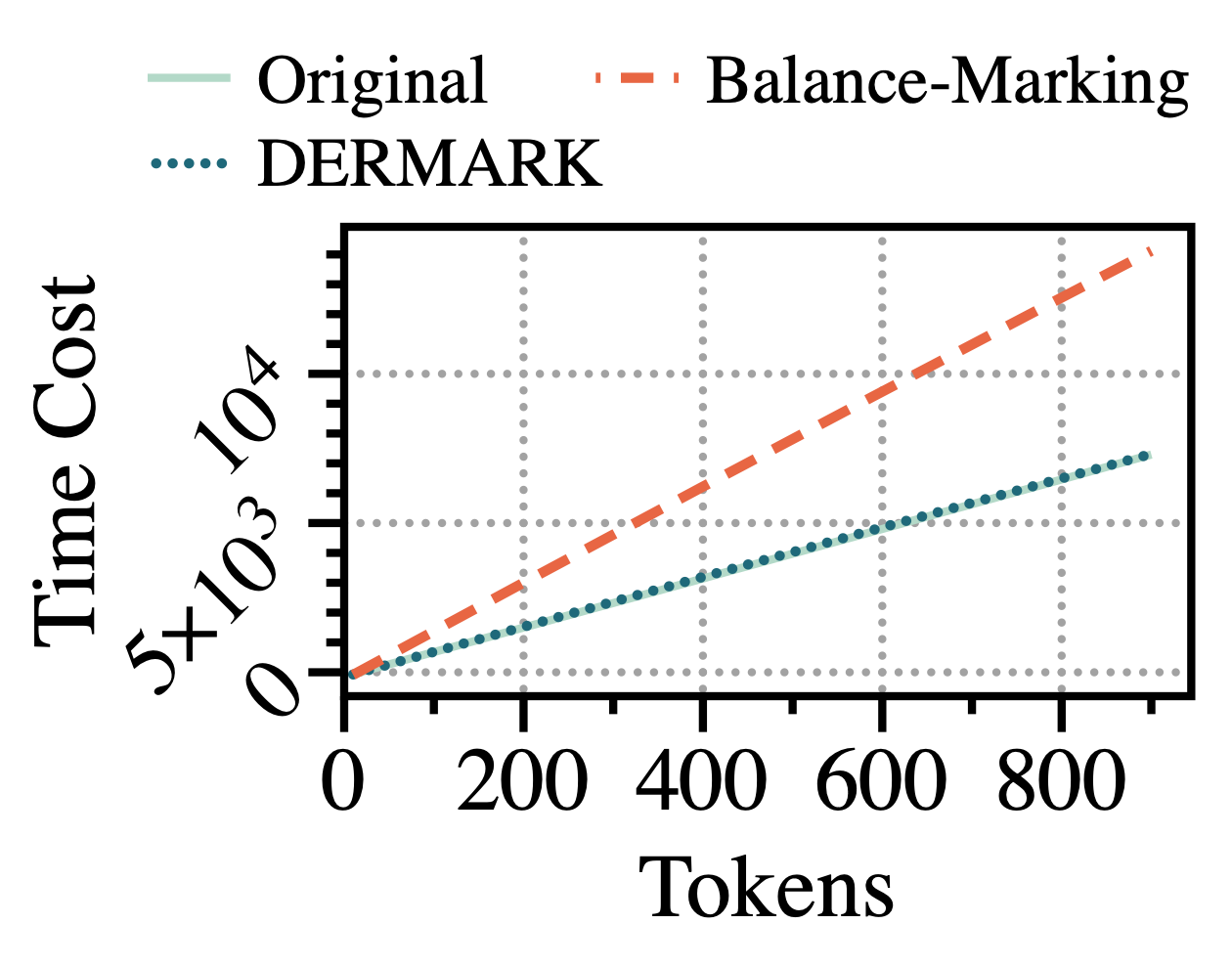}
        \caption{Text Generation}
        \label{fig:tim_generation_70b}
    \end{subfigure}
    \hfill
    \begin{subfigure}[b]{0.46\linewidth}
        \includegraphics[width=\linewidth]{ 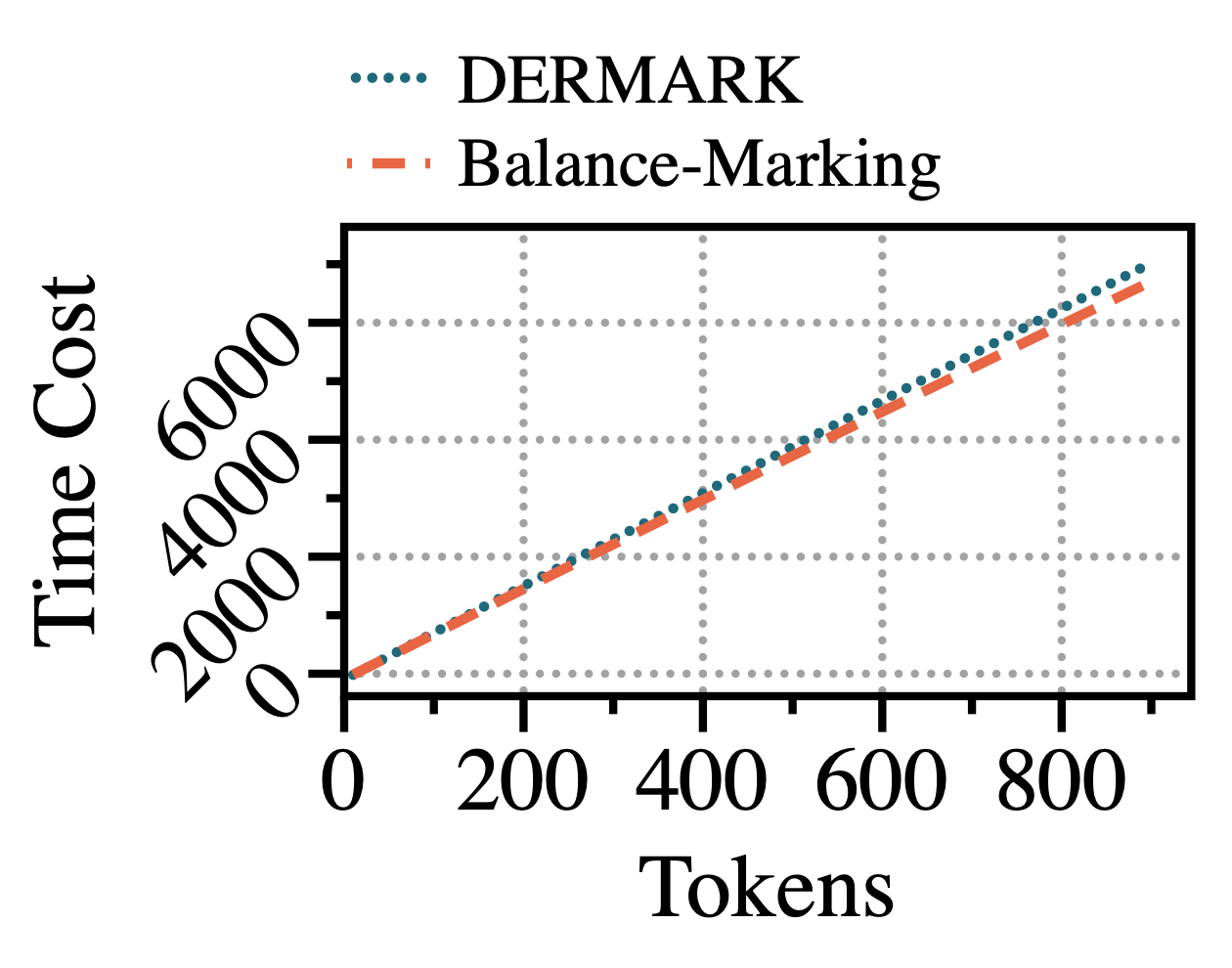}
        \caption{Watermark Extraction}
        \label{fig:tim_extraction_70b}
    \end{subfigure}
    \caption{Comparison of time cost in text generation and watermark extraction for \texttt{Llama-2-70B}.}
    \label{fig:time_cost_70b}
\end{figure}

To further demonstrate the practicality of DERMARK on large-scale LLMs, we evaluated the additional overhead on \texttt{LLaMA-2-70b} \cite{touvron2023llama}. The experimental results, as illustrated in Fig. \ref{fig:time_cost_70b}, show that the extra time overhead introduced by DERMARK during the watermark embedding phase is negligible relative to the total inference cost. In contrast, Balance-Marking incurs nearly double the overhead. While this issue may not be as pronounced for smaller LLMs where reasoning is faster, the impact becomes significantly more detrimental as the model scale increases and inference time lengthens. For the watermark extraction phase, the overhead of DERMARK is nearly identical to that of Balance-Marking.

In summary, the additional overhead introduced by DERMARK is negligible across LLMs of varying scales, underscoring its superior practicality.

\section*{Ethical Considerations}
In developing DERMARK to advance the embedding of multi-bit watermarking into LLMs, we have taken great care in our data practices. All datasets and models used in this study were derived from well-recognized previously published works, ensuring that they do not contain personally identifiable information. In addition, the evaluation benchmarks we employ are consistent with those established in previous studies, effectively eliminating the risk of privacy violations or data breaches.

\end{document}